\documentclass[11pt, twoside, letter]{article}
\usepackage[margin = 1in]{geometry}
\usepackage[utf8]{inputenc}
\usepackage{amsmath, amssymb, amsfonts}
\usepackage{ifthen}
\usepackage[boxed, linesnumbered]{algorithm2e}
\usepackage{hyperref}       

\usepackage{pgf}
\usepackage{tikz}
\usetikzlibrary{arrows,automata}

\usepackage{microtype}

\usepackage[utf8]{inputenc}
\usepackage{amsmath, amssymb, amsfonts}
\usepackage{ifthen}
\usepackage{comment}

\usepackage{pgf}
\usepackage{tikz}
\usetikzlibrary{arrows,automata}


\usepackage{amsthm}

\usepackage{mathtools}


\newtheorem{theorem}{Theorem}[section]
\newtheorem{corollary}[theorem]{Corollary}

\newtheorem{definition}[theorem]{Definition}
\newtheorem{remark}[theorem]{Remark}
\newtheorem{proposition}[theorem]{Proposition}

\newtheorem{assumption}[theorem]{Assumption}

\newcommand{\Real}{\mbox{${\mathbb R}$}}

\newcommand{\norm}[1]{\left\lVert #1 \right\rVert}

\let\norm\undefined 
\DeclarePairedDelimiter\norm{\lVert}{\rVert}


\title{Learning Agents in Black-Scholes Financial Markets}


\author{Tushar Vaidya\\SUTD\\tushar\_vaidya@sutd.edu.sg
	\and Carlos Murguia\\ TU/e \\C.G.Murguia@tue.nl
	\and Georgios Piliouras\\SUTD\\georgios@sutd.edu.sg}

\begin{document}


\maketitle
\begin{abstract}
Black-Scholes (BS) is the standard mathematical model for European option pricing in financial markets. Option prices are calculated using an analytical formula whose main inputs are strike (at which price to exercise) and  volatility. The BS framework assumes that volatility remains constant across all strikes, however, in practice it varies. How do traders come to learn these parameters?

We introduce natural agent-based models, in which they update their beliefs about the true implied volatility based on the opinions of other traders. We prove exponentially fast convergence of these opinion dynamics using techniques from control theory and leader-follower models, thus providing a resolution between theory and market practices. We allow for two different models, one with feedback and one with an unknown leader.
\end{abstract}

\section{Introduction}
Econophysics divides into two paradigms. Statistical econophysics relies on data, fitting certain power laws to existing asset prices at various time scales \cite{schinckus2012methodological, chakraborti2011econophysicspart1}. In statistical econophysics, zero-intelligence agents have random interactions. Agents are homogenous and have no learning ability. The central object of study is historical price data. The viewpoint is that interacting zero-intelligence traders' actions are already incorporated into price fluctuations. The focus is on the macroscopic aggregation in the form of available data. While this is an important area of research, agent-based models offer the opportunity to study the microscopic interactions in more detail.  Here agents are heterogeneous.

Our objective is to offer a cogent and clear motivation for agent-based econophysics in the context of option volatilities, whereby learning and interaction are made explicit. To an outsider it may seem that financial assets are observed at one price, decided by the market. In reality, prices fluctuate throughout the day and there is no equilibrium price: it is always in flux. Interaction between strategic traders and other players is embedded in all transactions and informational channels. Interaction is vital to understanding markets.  The motivation for this paper was inspired by the works of Kirman and Follmer \cite{kirman2002reflections, follmer2005equilibria}. Rather than develop a full-blown game theoretic or mean-field model, we advocate something in between where interaction of traders is intrinsic. We aim to take a more nuanced view of agent-based econophysics as espoused by  Chakroborti et al. \cite{chakraborti2011econophysicspart2}.

Most trading is done electronically. To be dominant, firms now invest huge sums in technology to get an edge. For futures trading, speed is vital to profits. Trading complex derivatives requires not only speed but huge amounts of investment in quantitative models. This in turn feeds the need for mathematicians, computer scientists and engineers. Increasingly, over the last two decades the way trading is conducted has also seen drastic changes. Electronification of the markets has affected both instruments traded on and off exchange. Algorithmic trading drives not only plain vanilla instruments like stocks and futures but plays a crucial role in derivatives trading \cite{bacoyannis2018idiosyncrasies, ganeshmulti, wei2019model}. Furthermore, the distinction between stock exchanges and over-the-counter (OTC) markets is not as clear as it once was \cite{malamud2017decentralized, das2008effects, simaan2007price}. In OTC markets, trading is between two counterparties and there is no centralized marketplace. Increasingly, over the last decade there has been a regulatory push to make OTC markets more exchange-like. In over-the-counter markets, participants may see what their competitors are quoting for a particular security but volume and the actual price transacted remain the privy of the bilateral counterparties. In some quarters, OTC markets are usually referred to as being quote-driven or truly dark markets \cite{duffie2011dark}. Regulation in the United States and European Union has resulted in fragmented exchange based trading but centralization of opaque OTC markets

\subsection{Options Markets}
Derivative contracts  are actively traded across the world's financial markets with a total estimate worth in the trillions of dollars.
To get an intuitive understanding of the setting and the issues at hand, let's consider the prototypical example of European options.

A European option is the right to buy or sell an underlying asset at some  point in the future at a fixed price, also known as the strike.
A call option gives the right to buy an asset and a put option gives the right to sell an asset at the agreed price.
On the opposite  side of the buyer is the seller who has relinquished his control of exercise.  Buyers of puts and calls can exercise the right to buy or sell.  Sellers of options have to fulfil obligations when exercised against.
 The payoff of a buyer of a call option with stock price $S_T$ at expiry time $T$ and exercise price $K$ is
$\max\{S_T-K,0\}$, whereas
for a put option  is
$\max\{K-S_T,0\}$.

To get a price we input the current stock price $S_0$ (e.g. \$101), the exercise price $K$ (e.g. \$90), the expiry $T$ (e.g. three months from today) and the volatility $\sigma$ in the Black-Scholes (BS) formula and out comes the answer, the quoted price of the instrument~\cite{chriss1996black, otto2001finite, kakushadze2017volatility}.
\[
 \mbox{Price}=BS(S_0,K,T,\sigma).
\]

 Volatility, which captures the beliefs about how turbulent the stock price will be, is left up to the market. This parameter is so important that in practice the market trades European calls and puts by quoting volatilities.\footnote{Using the Black-Scholes formula with particular implied volatility, traders obtain a dollar value price.}

 Options can be struck at different strike prices on the same asset (e.g. $K= \$90, \$75, \$60$).  If the underlying asset and the time to exercise $T$ (e.g. 3 months)  are the same,  one would expect the volatility to be the same at different strikes. In practice, however, the market after the 1987 crash has evolved to exhibit different volatilities.  This rather strange phenomenon is referred to as the smile, or smirk (see figure \ref{fig:smile}).
 Depending on the market, these smirks can be more or less pronounced. For instance, equity markets display a strong skew or smirk. A symmetric smile is more common in foreign exchange options markets. An excellent introduction to volatility smiles is given in \cite{derman2016volatility}.

How does the market decide about what the quoted volatility should be (e.g. for a stock index in 3 months from now) is a critical, but not well understood, question. This is exactly what we aim to study by introducing models of learning agents who update their beliefs about the volatility. Agent-based models on volatility-smile interaction and formation have not been thoroughly addressed in  finance or econophysics. They remain a challenge \cite{sornette2014physics}. Previous attempts have been made but the focus has never been on the mathematical or specific nature of interaction \cite{vagnani2009black, liu2014collective}.  Furthermore, our work takes into account the physicality of how trading occurs. An alternative perspective is offered in \cite{li2013investors, platen1998feedback}, again though the nature of interaction is missing. Nevertheless, these early attempts offer a good indication that at least the problem has garnered significant interest in different disciplines.

\begin{figure} \label{smiles}
	\centering
	\includegraphics[scale=1.2]{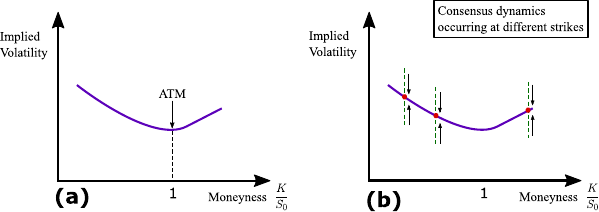}\caption{\textbf{(a)} A typical implied volatility smile for varying strikes $K$ divided by fixed spot price. Moneyness is $K/S_0$. ATM denotes at-the-money where $K$ equals $S_0$, \textbf{(b)} Consensus occurs as all investors' opinions of the implied volatility converge, round by round, to a distinct value for varying strikes.}
	\label{fig:smile}
\end{figure}

\subsection{Econophysics}
The challenge for  physicists is not to force existing physics-based models on human behaviour but rather develop new models \cite{challet2016regrets, iori2012agent, sinha2010econophysics}. To go from local microscopic interactions to global macroscopic behaviour is not an easy task \cite{stanley1996anomalous, schinckus2018ising}. In fact, the choice of models seems infinite. There are a plethora of agent-based models \cite{chakraborti2011econophysicspart2, sinha2010econophysics, castellano2009statistical}. Which one is correct?  And moreover which type of social learning is representative of financial markets trading. Barron provides an early guide \cite{lebaron2001builder}. Agent-based models were proclaimed as the future for econophysics \cite{farmer2009economy, samanidou2007agent}.  While development in this area has been steady, the problem of the emergence of volatility smiles remains unresolved. The volatility smile is an active and vigorous area of research in the mathematical finance community. Many models postulate a stochastic process for the underlying stock and volatility combined. 

\subsubsection{Knightian Uncertainty}
Risk and uncertainty are two different concepts \cite{ellsberg1961risk, knight2012risk,schinckus2009economic}. Risky assets are those on which the probabilities of random events are well-defined and known. For instance, suppose we observe historical data of a stock price. Are we confident to say we know the distribution of the stock's returns? If we are, then the stock is considered risky. Its risk is quantifiable. However, if we were unsure of even the correct probability measure, then we would be faced with uncertainty. In a sense, this captures the essence of financial markets. Traders and players use different probability measures. No such probability measure dominates. In incomplete markets, the choice of a correct probability measure such that a derivative contract is priced correctly is a subjective and quantitative exercise. In any case, no correct model exists \cite{duembgen2014estimate, khrennikova2019asset, mykland2003financial, cheridito2017duality, acciaio2016model}. As a result, participants in financial markets are free to choose whichever probability model they calibrate to market data \cite{davis2016model, cont2006model, burzoni2016universal}. 

The problem with economics based models and those in mathematical finance literature is that many times the analysis is centred on a representative agent. In case of risk and uncertainty, the choice of pricing a derivative contract boils down to choosing a correct equivalent martingale measure under which a derivative claim is replicable. For market-makers and dealers, the choice of models is vast. Each player has to make a choice and inevitably no two institutions will use the same models with the same parameters. In this case, it is remarkable that the market will aggregate the diverse beliefs to arrive at a consensus smile. At the microscopic level, though the dealers are observing each others' updates. Hence, our model can be seen as a meta opinion dynamics framework built upon the individual choices of the dealers.

\subsubsection{Financial markets: non-Bayesian}
In financial markets, updating occurs at high frequency across geographic locations \cite{wissner2010relativistic, buchanan2015physics}. Agents move simultaneously:  cancellations are the norm \cite{gu2013position, yoshimura2020mathematical, eisler2012price}. In practical terms, sequential Bayesian learning models don't seem appropriate \cite{hkazla2019reasoning, mossel2014asymptotic}.  Bayesian observational learning examples include \cite{banerjee1992simple, bikhchandani1992theory} and \cite{smith2000pathological}. These models are \textbf{sequential} in nature. They study herd behaviour.  As time passes, a player in turn observes the actions of previous agents and receives a private signal. Each agent has a one-off decision when she updates her posterior probability and takes an action. In some instances, the  $n$th agent may reach the truth as  $n \to \infty$. 

In Degroot learning, myopic updating occurs in each iteration. Agents in our setup have fixed weights but update their responses until consensus is reached. Recently there have been some experimental papers on the evidence of Degroot updating \cite{chandrasekhar2019testing, becker2017network}. Repeated averaging models are our base precisely because they capture the nature of interaction and learning in financial markets so compactly.  Players can observe previous choices but not the payoffs of their competitors. A more in depth discussion of learning in games would take us further away from our goal of studying the mathematical nature of interaction. The reader can consult \cite{fudenberg1998theory, kalai1994weak} for a game theoretic perspective. 

\textbf{Our contribution.} We introduce two different classes of learning models that converge to a consensus. Our interest is not in equilibrium but what process lead to it \cite{papadimitriou2018game,piliourasaamas, panageasopinion}. The first  introduces a feedback mechanism  (Section \ref{CDS1}, Theorem \ref{thm1}) where agents who are off the true ``hidden" volatility parameter feel a slight (even infinitesimally so) pull towards it along with the all the other ``random" chatter of the market. This model captures the setting where traders have access to an alternative trading venue or an information source provided by brokers and private message boards. The second model incorporates a market leader (e.g. Goldman Sachs) that is confident in its own internal metrics or is privy to client flow (private information) and does not give any weight to outside opinions (Section \ref{CDS2},Theorem \ref{thm2}).
Proving the convergence results (as well as establishing the exponentially fast convergence rates)  requires tools from discrete dynamical systems. We showcase as well as complement our theoretical results with experiments (e.g. Figures \ref{FigSim1}.a-\ref{FigSim1}.d), which for example show that if we move away from our models convergence is no longer guaranteed. 

We formalize the multi-dimensional analogues of our two models above  using Kronecker products (Section \ref{CDV}, Theorems \ref{thm3} and  \ref{thm4}). Thus our models show how a volatility curve could function as a global attractor given adaptive agents.  We conclude the paper by discussing future work on identifying necessary structural conditions on the shape of arbitrage free volatility curves.

\section{Model description}
In mathematical opinion dynamic models, agents take views of other agents into account before arriving at their own updated estimate. Agents can observe other agents' previous signals.

Degroot \cite{Degroot} was one of the early developers of such observational learning dynamics.  While simple, these models allow us to examine convergence to consensus.  In a sense, these type of models are called naive models, as agents can recall perfectly what the other players submitted in the previous round.  See the survey papers \cite{masuda2017random, acemoglu2011opinion, golublearning,noorazar2020recent}.

\subsection{Volatility Basics}
Investors have an initial opinion of the implied volatility, which subsequently gets updated after taking into account volatilities of other agents. A feedback mechanism aids the agents in arriving at the true volatility parameter.

At all times the focus is on a static picture of the volatility smile.  Within this static framework agents are updating their opinion of the true implied volatility. This updating occurs in a high-frequency sense. In an exchange setting, one can think of all bids and offers as visible to agents. The agents initially are unsure of the true value of the implied volatility, but by learning - and feedback - get to the true parameter. Our first attempt is a naive learning model common in social networks. Learning occurs between trading times. Thus our implicit assumption is that no transactions occur while traders are adjusting and learning each others quotes.

This rather peculiar feature is market practice. Trading happens at longer intervals than quote updating. This is as true for high frequency trading of stocks as it is for options markets. Quotes and prices - or rather vols - are changing more frequently than actual transactions.

Each dollar value of an option corresponds to an implied volatility parameter $\sigma(K,T) \in (0,1)$ that depends on strike and expiry. Implied volatility is quoted in percentage terms.

\begin{assumption}
	We have three types of players: agents/traders, brokers and leaders. Brokers give feedback to the traders.  The ability of agents to determine this feedback is their learning ability. Leaders are unknown and don't give feedback but their quotes are visible.
\end{assumption}
Each agent takes a weighted average of the all the agents' estimates of volatility at a particular strike and expiry.


\subsection{Naive Opinion Dynamics}

A first approach towards opinion dynamics is to assume each agent takes a weighted average of other agents' opinions and updates his own estimate of the volatility parameter for the next period, i.e., at time $t$, the opinion $x_t^i \in \mathbb{R}$ of the $i$-th agent is given by
	\begin{equation} \label{eqn:1p}
	x_t^i= \sum_{j=1}^n a_{ij} x_{t-1}^j, \hspace{1mm} t \in \mathbb{N},
	\end{equation}
where $x_{t-1}^j \in \mathbb{R}$ is the opinion of agent $j$ at time $(t-1)$ and $a_{ij} \geq 0$ denotes the opinion weights for the $n$ investors with $\sum_{j=1}^{n} a_{ij}=1$ and $a_{ii} > 0$ for all $1\leq i \leq n$. Define $X_t:=(x_t^1,\ldots,x_t^n)^{\top}$; then, the opinion dynamics of the $n$ agents can be written in matrix form as follows
	\begin{equation} \label{eqn:1}
	X_t=A X_{t-1},
	\end{equation}
where $A := a_{ij} \in \mathbb{R}^{n \times n}$ is a \emph{row-stochastic matrix}.

\begin{definition}[\textbf{consensus}]
The $n$ agents \eqref{eqn:1} are said to reach consensus if for any fixed initial condition $X_1 \in \mathbb{R}^n$,  $| x^i_t - x^j_t | \rightarrow 0$ as $t \rightarrow \infty$ for all $i,j \in \{1,\ldots n\}$.
\end{definition}

\begin{definition}[\textbf{consensus to a point}]
The $n$ agents \eqref{eqn:1} are said to reach consensus to a point if for any initial condition $X_1  \in \mathbb{R}^n$, $\lim_{t \to \infty} X_t =c \mathbf{1}_{n}$, where $\mathbf{1}_{n}$ denotes the $n \times 1$ vector composed of only ones and $c \in \mathbb{R}$. The constant $c$ is often referred to as the consensus value.
\end{definition}

For the opinion dynamics $\eqref{eqn:1}$, we introduce the following result by \cite{Degroot} (see also \cite{Salinelli2014} for definitions).

\begin{proposition}\label{Degroot}
Consider the opinion dynamics in equation \eqref{eqn:1}. If $A$ is aperiodic and irreducible, then for any initial condition $X_1  \in \mathbb{R}^n$ consensus to a point is reached. The consensus value $c$ depends on both the matrix $A$ and the initial condition $X_1$.
\end{proposition}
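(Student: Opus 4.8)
The plan is to reduce the statement to a single fact about the limiting behavior of the powers of $A$, namely that $A^{t}$ converges to a rank-one matrix $\mathbf{1}_n \pi^\top$, where $\pi$ is the stationary distribution of $A$. First I would unroll the recursion \eqref{eqn:1} to write $X_t = A^{t-1} X_1$, so that the entire question reduces to understanding $\lim_{t\to\infty} A^{t-1}$.

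Next I would invoke the spectral structure of $A$. Since $A$ is row-stochastic, $A\mathbf{1}_n = \mathbf{1}_n$, so $1$ is an eigenvalue with right eigenvector $\mathbf{1}_n$; by the Gershgorin circle theorem every eigenvalue satisfies $|\lambda|\le 1$, hence the spectral radius is $\rho(A)=1$. Because $A$ is nonnegative, irreducible, and aperiodic, it is primitive, so the Perron--Frobenius theorem applies: the eigenvalue $1$ is algebraically simple (hence a trivial Jordan block) and is the unique eigenvalue of modulus $1$, while every other eigenvalue satisfies $|\lambda_i| < 1$. Let $\pi$ denote the associated left Perron eigenvector, normalized so that $\pi^\top A = \pi^\top$, $\pi > 0$, and $\pi^\top \mathbf{1}_n = 1$; this is the stationary distribution, determined entirely by $A$.

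With this spectral gap in hand, I would pass to the Jordan (or spectral) decomposition of $A$. The contribution of the Perron eigenvalue is the constant rank-one projector $\mathbf{1}_n \pi^\top$, while each remaining Jordan block belongs to an eigenvalue $\lambda_i$ with $|\lambda_i|<1$ and therefore contributes a term that decays like $|\lambda_i|^{t}\,t^{m}$ (for a fixed block size $m$) and vanishes as $t \to \infty$. Consequently $A^{t-1} \to \mathbf{1}_n \pi^\top$, and therefore
\[
\lim_{t\to\infty} X_t = \mathbf{1}_n \pi^\top X_1 = (\pi^\top X_1)\,\mathbf{1}_n = c\,\mathbf{1}_n,
\]
with $c = \pi^\top X_1$, a convex combination of the initial opinions that manifestly depends on both $A$ (through $\pi$) and the initial condition $X_1$. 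This establishes consensus to a point.

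The main obstacle --- and the only place where the hypotheses are essential --- is ruling out any other eigenvalue on the unit circle and any nontrivial Jordan block at $\lambda=1$. Aperiodicity is exactly what forbids additional unit-modulus eigenvalues (a periodic irreducible stochastic matrix has the $d$-th roots of unity as eigenvalues and its powers fail to converge), while irreducibility guarantees simplicity of the Perron root. If one prefers to avoid the Jordan-form machinery, an alternative is a direct contraction argument: by primitivity some power $A^k$ is strictly positive, and one shows that the disagreement $\max_i x_t^i - \min_i x_t^i$ contracts geometrically via the Dobrushin ergodicity coefficient, yielding the same conclusion with an explicit exponential rate.
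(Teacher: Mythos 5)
Your argument is correct and complete: unrolling to $X_t = A^{t-1}X_1$, using primitivity (irreducible plus aperiodic) to invoke Perron--Frobenius, isolating the simple Perron eigenvalue $1$ with projector $\mathbf{1}_n\pi^\top$, and killing the remaining Jordan blocks by the spectral gap gives exactly $\lim_t X_t = (\pi^\top X_1)\mathbf{1}_n$. Note that the paper does not actually prove this proposition --- it imports it from DeGroot and Salinelli--Cominetti as a known result --- so there is no in-paper argument to compare against; yours is the standard textbook proof, and your closing remark that one could instead use the Dobrushin coefficient to get an explicit geometric contraction of $\max_i x_t^i - \min_i x_t^i$ is a legitimate alternative that avoids Jordan forms entirely. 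One small observation: under the paper's standing assumption $a_{ii}>0$ for all $i$, irreducibility already forces aperiodicity, so in the paper's setting the aperiodicity hypothesis is automatic rather than an extra condition.
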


\begin{remark}
Proposition \ref{Degroot} implies that if the row stochastic opinion matrix $A$ is aperiodic and irreducible; then all the agents converge to some consensus value $c$. However, since $c$ depends on the unknown initial opinion $X_1$, the consensus value $c$ is unknown and, in general, different from the true volatility $\sigma(K,T)$. We wish to alleviate this and thus introduce two novel models.
\end{remark}

\section{Consensus (scalar agent dynamics)}\label{CDS}

In this section, we assume that the agents are able to learn how far off they are from the true volatility by informational channels in the marketplace. There are many avenues, platforms and private online chat rooms that provide quotes for option prices; some of these are stale and some are fresh. The agents' learning ability determines the quality of the feedback from all these sources. In reality, options are not traded on one exchange or platform. There are multiple venues and though there might be a dominant marketplace, the same instruments can be traded across different venues and locations. We aggregate all of this information in the form of feedback with learning ability. If agents are fast learners, they adjust their volatility  estimates quickly.

\subsection{Consensus with Feedback}
\label{CDS1}

We model this feedback by introducing an extra driving term into the opinion dynamics (\ref{eqn:1p}). An early model developed by Mizuno et al. \cite{mizuno2004traders} shares some similarities to ours. Traders use feedback from past behaviour. Our model is a discrete autoregressive process but the focus is on learning in high-frequency time \cite{mizuno2003analysis}. Furthermore, our model formalizes this in a more social and dynamical setup. In particular, we feedback the difference between the agents' opinion and the true volatility $\sigma(K,T)$ scaled by a \emph{learning coefficient} $\epsilon_i \in (0,1)$. We assume that $\sigma(K,T)$ is invariant, i.e., for some fixed $\bar{\sigma} \in (0,1)$, $\sigma(K,T) = \bar{\sigma}$ for some fixed strike $K$ and maturity $M$. Then, the new model is written as follows
\begin{equation} \label{eqn:2p}
x_t^i= \sum_{j=1}^n a_{ij} x_{t-1}^j + \epsilon_i(\bar{\sigma} - x_{t-1}^i),
\end{equation}
or in matrix form
\begin{align}\label{eqn:model1}
X_t=A X_{t-1} + \mathcal{E}(\bar{\sigma} \mathbf{1}_{n} - X_{t-1}),
\end{align}
where $\mathcal{E} := \text{diag}(\epsilon_1,\ldots,\epsilon_n)$. Then, we have the following result.
	
\begin{theorem}\label{thm1}
Consider the opinion dynamics \eqref{eqn:model1} and assume that $\epsilon_i \in (0,a_{ii})$, $i=\{1,\ldots,n\}$; then, consensus to $\bar{\sigma}$ is reached, i.e., $\lim_{t \to \infty} X_t = \bar{\sigma}\mathbf{1}_{n}$.
\end{theorem}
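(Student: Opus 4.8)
The plan is to recast \eqref{eqn:model1} as an affine linear recursion and reduce the claim to a spectral-radius bound on a single matrix. Collecting the $X_{t-1}$ terms, I would rewrite the dynamics as
\begin{equation*}
X_t = (A - \mathcal{E}) X_{t-1} + \bar{\sigma}\, \mathcal{E}\mathbf{1}_n,
\end{equation*}
and set $M := A - \mathcal{E}$. The first step is to verify that $\bar{\sigma}\mathbf{1}_n$ is a fixed point: since $A$ is row-stochastic we have $A\mathbf{1}_n = \mathbf{1}_n$, so $M(\bar{\sigma}\mathbf{1}_n) + \bar{\sigma}\mathcal{E}\mathbf{1}_n = \bar{\sigma}\mathbf{1}_n - \bar{\sigma}\mathcal{E}\mathbf{1}_n + \bar{\sigma}\mathcal{E}\mathbf{1}_n = \bar{\sigma}\mathbf{1}_n$, as required.

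Next I would pass to the error variable $e_t := X_t - \bar{\sigma}\mathbf{1}_n$. Subtracting the fixed-point identity from the recursion cancels the constant driving term and yields the purely linear dynamics $e_t = M e_{t-1}$, hence $e_t = M^{t-1} e_1$. Convergence to consensus at $\bar{\sigma}$ is therefore equivalent to showing $M^{t-1} \to 0$, which holds if and only if the spectral radius $\rho(M) < 1$.

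The crux is to exploit the hypothesis $\epsilon_i \in (0, a_{ii})$ to control $M$ in the induced $\infty$-norm (maximum absolute row sum). The off-diagonal entries of $M$ coincide with those of $A$ and are nonnegative, while each diagonal entry equals $a_{ii} - \epsilon_i$, which is strictly positive precisely because $\epsilon_i < a_{ii}$. Thus $M$ is entrywise nonnegative and its $i$-th absolute row sum equals $\sum_j a_{ij} - \epsilon_i = 1 - \epsilon_i$. Taking the maximum over $i$ gives $\|M\|_\infty = 1 - \min_i \epsilon_i < 1$, where strict inequality uses $\epsilon_i > 0$. Since $\rho(M) \le \|M\|_\infty$ for any induced norm, we conclude $\rho(M) < 1$, so $\|e_t\|_\infty \le (1 - \min_i \epsilon_i)^{t-1}\|e_1\|_\infty \to 0$ geometrically, giving both the claimed limit and the exponential rate.

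I expect the only delicate point to be the role of the two-sided bound on $\epsilon_i$: the lower bound $\epsilon_i > 0$ is what makes the row sums strictly sub-stochastic and hence forces contraction, whereas the upper bound $\epsilon_i < a_{ii}$ is exactly what keeps the diagonal of $M$ nonnegative, so that the absolute row sum collapses cleanly to $1 - \epsilon_i$ rather than $|a_{ii} - \epsilon_i| + \sum_{j \neq i} a_{ij}$. If one only wanted $\rho(M) < 1$ one could in principle allow larger $\epsilon_i$ and argue via Gershgorin disks, but the stated hypothesis makes the elementary $\infty$-norm argument tight and avoids any case analysis.
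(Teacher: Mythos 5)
Your proof is correct, but it follows a different route from the paper's proof of the theorem itself. The paper writes down the explicit solution $X_t=(A-\mathcal{E})^t X_1+\sum_{j=0}^{t-1}(A-\mathcal{E})^j\mathcal{E}\bar{\sigma}\mathbf{1}_n$, invokes the Gershgorin circle theorem to get $\rho(A-\mathcal{E})<1$, sums the resulting Neumann series to $(I_n-A+\mathcal{E})^{-1}\mathcal{E}\bar{\sigma}\mathbf{1}_n$, and then uses $A\mathbf{1}_n=\mathbf{1}_n$ to verify that this inverse applied to $\mathcal{E}\mathbf{1}_n$ returns $\mathbf{1}_n$. You instead identify $\bar{\sigma}\mathbf{1}_n$ as a fixed point up front, subtract to obtain the homogeneous error recursion $e_t=(A-\mathcal{E})e_{t-1}$, and contract in the induced $\infty$-norm. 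This is essentially the argument the paper reserves for Corollary \ref{cor1} (the exponential-rate statement), promoted to a proof of the theorem itself; it buys you the convergence rate for free and avoids both the matrix inversion and the separate Gershgorin step. Your row-sum computation is in fact slightly cleaner than the paper's: you note that $\epsilon_i<a_{ii}$ keeps the diagonal of $A-\mathcal{E}$ nonnegative, so the $i$-th absolute row sum collapses exactly to $1-\epsilon_i$ and the rate is $1-\min_i\epsilon_i$, whereas the corollary in the paper carries the absolute value $|a_{ii}-\epsilon_i|$ through without making this simplification explicit. Both arguments rest on the same two hypotheses (row-stochasticity of $A$ and $0<\epsilon_i<a_{ii}$), and your observation about which bound does which job is accurate.
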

\begin{proof}
It is easy to verify that the solution $X_t$ of the difference equation (\ref{eqn:model1}) is given by
\begin{align}
X_{t+1}&=(A - \mathcal{E})^t X_1 + \text{$\sum$}_{j=0}^{t-1}(A - \mathcal{E})^j \mathcal{E}\bar{\sigma} \mathbf{1}_{n}, \hspace{1mm} t>1. \label{eqn:3}
\end{align}
By Gershgorin circle theorem, the spectral radius $\rho(A-\mathcal{E})<1$ for all $i$, $\epsilon_i < a_{ii}$. It follows that $\sum_{j=0}^{\infty}(A - \mathcal{E})^j \mathcal{E}\bar{\sigma} \mathbf{1}_{n} = (I_n - A + \mathcal{E})^{-1}\mathcal{E} \bar{\sigma} \mathbf{1}_{n}$, where $I_n$ denotes the identity matrix of dimension $n$, and $\lim_{t\to \infty} (A - \mathcal{E})^t=\mathbf{0}$, see \cite{Horn:2012:MA:2422911}. The matrix $A$ is row stochastic; then, $(I-A)\mathbf{1}_n=\mathbf{0}_n$, where $\mathbf{0}_n$ denotes the $n \times 1$ vector composed of only zeros. Hence, we can write $\mathcal{E} \mathbf{1}_n=(I_n-A)\mathbf{1}_n + \mathcal{E} \mathbf{1}_n$; and consequently $\mathbf{1}_n=(I_n-A + \mathcal{E})^{-1}\mathcal{E} \mathbf{1}_n$. It follows that
\begin{align*}
\lim_{t\to \infty} X_{t+1} &= \lim_{t\to \infty}(A - \mathcal{E})^t X_1+ \text{$\sum$}_{j=0}^{\infty}(A - \mathcal{E})^j \mathcal{E}\bar{\sigma} \mathbf{1}_{n}\\
&= \mathbf{0}_n + (I_n - A + \mathcal{E})^{-1}\mathcal{E}  \mathbf{1}_{n} \bar{\sigma} =  \mathbf{1}_{n}\bar{\sigma},
\end{align*}
and the assertion follows.
\end{proof}

\begin{corollary}\label{cor1}
Consensus to $\bar{\sigma}$ is reached exponentially with  convergence rate $\norm{A-\mathcal{E}}_\infty$, i.e., $\max_{i}\{|x_t^i - \bar{\sigma}|\} \leq \norm{A-\mathcal{E}}_\infty^{t-1} \norm{X_{1} - \bar{\sigma}\mathbf{1}_n}_\infty$, $i \in \{1,\ldots,n\}$, where $\norm{\cdot}_{\infty}$ denotes the matrix norm induced by the vector infinity norm. 
\end{corollary}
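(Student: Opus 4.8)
The plan is to reduce the statement to a one-step contraction for the error variable and then iterate. First I would introduce the error $e_t := X_t - \bar{\sigma}\mathbf{1}_n$ and substitute $X_{t-1} = e_{t-1} + \bar{\sigma}\mathbf{1}_n$ into the dynamics \eqref{eqn:model1}. Since $A$ is row stochastic we have $A\mathbf{1}_n = \mathbf{1}_n$, so $A X_{t-1} = A e_{t-1} + \bar{\sigma}\mathbf{1}_n$, while the feedback term becomes $\mathcal{E}(\bar{\sigma}\mathbf{1}_n - X_{t-1}) = -\mathcal{E}e_{t-1}$. The two copies of $\bar{\sigma}\mathbf{1}_n$ cancel and the recursion collapses to the homogeneous form $e_t = (A - \mathcal{E})e_{t-1}$, so that $e_t = (A - \mathcal{E})^{t-1}e_1$.

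Next I would take the vector infinity norm of both sides and use submultiplicativity of the induced matrix norm to obtain $\norm{e_t}_\infty \le \norm{A - \mathcal{E}}_\infty^{t-1}\norm{e_1}_\infty$. Observing that $\max_i |x_t^i - \bar{\sigma}| = \norm{e_t}_\infty$ and $e_1 = X_1 - \bar{\sigma}\mathbf{1}_n$, this is already the claimed inequality; what remains is to confirm that the bound actually decays, i.e.\ that $\norm{A - \mathcal{E}}_\infty < 1$.

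This last point is where the hypothesis $\epsilon_i \in (0, a_{ii})$ does the work, and I expect it to be the only genuine (if mild) obstacle. The induced infinity norm is the maximum absolute row sum, so I would compute row $i$ of $A - \mathcal{E}$ directly: its off-diagonal entries are the $a_{ij}$ and its diagonal entry is $a_{ii} - \epsilon_i$. The assumption $\epsilon_i < a_{ii}$ guarantees $a_{ii} - \epsilon_i \ge 0$, so every entry of the row is nonnegative and the absolute row sum is simply $\sum_{j} a_{ij} - \epsilon_i = 1 - \epsilon_i$. Hence $\norm{A - \mathcal{E}}_\infty = 1 - \min_i \epsilon_i < 1$, upgrading the inequality to exponential convergence at the stated rate. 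The subtlety to watch is precisely this nonnegativity: were some $\epsilon_i \ge a_{ii}$, the diagonal entry could turn negative, the absolute value would no longer telescope with the row sum of $A$, and the clean identification of the rate would fail. Note also that this complements Theorem~\ref{thm1}, which already secured $\rho(A-\mathcal{E}) < 1$ via Gershgorin; the induced-norm computation here is what converts that qualitative spectral bound into an explicit geometric rate.
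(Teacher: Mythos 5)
Your proposal is correct and follows essentially the same route as the paper's proof: reduce to the homogeneous error recursion $e_t=(A-\mathcal{E})e_{t-1}$ via row-stochasticity, iterate, and bound by the induced infinity norm, which is shown to be strictly less than $1$ from the row-sum formula under $\epsilon_i\in(0,a_{ii})$. Your explicit evaluation $\norm{A-\mathcal{E}}_\infty = 1-\min_i\epsilon_i$ is in fact a slightly sharper statement than the paper's, which only records that the maximum row sum is below $1$.
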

\begin{proof}
	Define the error sequence $E_{t-1}:= (X_{t-1} - \bar{\sigma}\mathbf{1}_n) \in \Real^n$. Then, from \eqref{eqn:model1}, the following is satisfied:
	\[
	\begin{array}{lll}
	E_{t} = X_{t} - \bar{\sigma}\mathbf{1}_n\\[1mm]
	\hspace{4.8mm} = AX_{t-1} + \mathcal{E}(\bar{\sigma}\mathbf{1}_n - X_{t-1}) - \bar{\sigma}\mathbf{1}_n\\[1mm]
	\hspace{4.8mm} = A(E_{t-1} + \bar{\sigma}\mathbf{1}_n) + \mathcal{E}(\bar{\sigma}\mathbf{1}_n - (E_{t-1} + \bar{\sigma}\mathbf{1}_n)) - \bar{\sigma}\mathbf{1}_n\\[1mm]
	\hspace{4.8mm} = (A-\mathcal{E})E_{t-1} + \bar{\sigma}(A-I_n)\mathbf{1}_n\\[1mm]
	\hspace{4.8mm} = (A-\mathcal{E})E_{t-1},
	\end{array}
	\]
	The last equality in the above expression follows from the fact that $(A-I_n)\mathbf{1}_n=0$, because $A$ is a stochastic matrix. The solution $E_t$ of the above difference equation is given by $E_t = (A-\mathcal{E})^{t-1}E_1$, where $E_1 = X_{1} - \bar{\sigma}\mathbf{1}_n$ denotes the initial error. Let $\norm{E_t}_{\infty} = \max_{i}(|e_t^i|)$, $i\in\{1,\ldots,n\}$, where $E_t = (e_t^1,\ldots,e_t^n)^T$. Note that exponential convergence of $\norm{E_t}_{\infty}$ implies exponential convergence of $E_t$ itself. Using the solution $E_t = (A-\mathcal{E})^{t-1}E_1$, the following can be written:
	\[
	\begin{array}{lll}
	\norm{E_t}_{\infty} =\norm{(A-\mathcal{E})^{t-1}E_1}_{\infty}\\[1mm]
	\hspace{11.5mm} \leq \norm{(A-\mathcal{E})}^{t-1}_{\infty} \norm{E_1}_{\infty},
	\end{array}
	\]
	where $\norm{(A-\mathcal{E})}_{\infty}$ denotes the matrix norm of $(A-\mathcal{E})$ induced by the vector infinity norm \cite{Horn:2012:MA:2422911}. The inequality $\norm{E_t}_{\infty} \leq \norm{(A-\mathcal{E})}^{t-1}_{\infty} \norm{E_1}_{\infty}$ implies exponential convergence if $\norm{(A-\mathcal{E})}_{\infty}<1$. Because $A=a_{ij}$ and $\mathcal{E}=\text{diag}(\epsilon_1,\ldots,\epsilon_n)$, we can compute $\norm{(A-\mathcal{E})}_{\infty}$ as $\norm{(A-\mathcal{E})}_{\infty}= \max_i\big( \sum_{j=1,j\neq i}^{n}|a_{ij}| + |a_i - \epsilon_i|   \big)$, $i \in \{1,\ldots,n  \}$. The matrix $A$ is stochastic, which implies $a_{ij}\geq 0$ and $\sum_{i=1}^{n} |a_{ij}| = 1$; therefore, under the conditions of Theorem \ref{thm1} (i.e., $\epsilon_i \in (0,a_{ii})$), $\norm{(A-\mathcal{E})}_{\infty}= \max_i\big( \sum_{j=1,j\neq i}^{n}|a_{ij}| + |a_i - \epsilon_i| \big) < 1$ and hence exponential convergence of the consensus error $E_t$ can be concluded with convergence rate given by $\norm{(A-\mathcal{E})}_{\infty}= \max_i\big( \sum_{j=1,j\neq i}^{n}|a_{ij}| + |a_i - \epsilon_i| \big)$.	
\end{proof}
\subsection{Random case}
Under suitable random conditions for the trust matrix $A$ and $\mathcal{E}$, we can still have consensus. In this case, the learning rates and and weights are independently and identically distributed from each iteration. However we need a condition to ensure convergence, namely that on average the learning rates are less than self-belief, condition. Since this is only in expectation, a probabilistic statement, there is some leeway on the learning rates being strictly less than self-belief $a_{ii}$ at time $t$.

\begin{theorem}\label{Randomthm1}
Consider the opinion dynamics 
\begin{align}\label{eqn:Randommodel1}
X_t=A_t X_{t-1} + \mathcal{E}_t(\bar{\sigma} \mathbf{1}_{n} - X_{t-1}),
\end{align}
where $A_t$ and $ \mathcal{E}_t$ are independent and identically distributed (iid). Furthermore suppose

\[
\mathbb{E}[\log \norm{A_t -\mathcal{E}_t}_\infty] < 0
\] 
then, consensus to $\bar{\sigma}$ is reached, i.e., $\lim_{t \to \infty} X_t = \bar{\sigma}\mathbf{1}_{n}$.
\end{theorem}

\begin{proof}
We rewrite the above iteration, by subtracting $\bar \sigma$, from both sides, dropping the one vector notation as the context is clear
\begin{align*}
X_t  -  \bar \sigma &=A_t X_{t-1} + \mathcal{E}_t(\bar{\sigma}  - X_{t-1}) - \bar \sigma\\
X_t  -  \bar \sigma &=A_t X_{t-1}   - A_t \bar \sigma+ \mathcal{E}_t\bar{\sigma}  - \mathcal{E}_t X_{t-1}\\
X_t  -  \bar \sigma &=(A_t -\mathcal{E}_t)  (X_{t-1}   - \bar \sigma)\\
Y_t  &=(A_t -\mathcal{E}_t)  Y_{t-1} \\
Y_t  &=B_t Y_{t-1},
\end{align*}
 where $Y_t=X_t  -  \bar \sigma$ and $B_t=A_t -\mathcal{E}_t$. We want to show $Y_t \to 0$. To this end, iterating the above recursion we arrive at
  \[
 Y_t =\underbrace{B_t B_{t-1}\cdots B_1}_{\mbox{iid matrices}} Y_0.
 \]
Taking norms on the above equation, gives us the following inequalities, understanding that we mean the $\norm{\cdot}_\infty$ norm:
\begin{align*}
 \norm{Y_t} &=\norm{B_t B_{t-1}\cdots B_1 Y_0}\\
  \norm{Y_t} &\leq\norm{B_t } \norm{B_{t-1}}\cdots \norm{B_1} \norm{Y_0}\\
 \log \norm{Y_t} &\leq \log \left(  \norm{B_t } \norm{B_{t-1}}\cdots \norm{B_1} \norm{Y_0}\right) \\
  \log \norm{Y_t} &\leq \log  \norm{B_t }  + \log \norm{B_{t-1}}+\cdots + \log \norm{B_1} + \log \norm{Y_0}\\
\norm{Y_t} &\leq \exp^{t \,\frac{\sum_{k=1}^{t}\log  \norm{B_k }}{t}}  \norm{Y_0}
\end{align*} 
The first inequality follows by sub-multiplicative property of matrix norms. Moreover, by the law of large numbers $\frac{1}{t}\sum_{k=1}^{t}\log  \norm{B_k }_\infty \longrightarrow \mathbb{E}[\log \norm{A_t -\mathcal{E}_t}_\infty]$, which is negative by assumption.  So the exponent ensures that, as the initial opinion $\norm{Y_0}_\infty < \infty$  is finite,

\[
\lim_{t \to \infty} \norm{Y_t}_\infty = 0.
\]
Consequently,  $Y_t \longrightarrow 0$ and every agent reaches consensus.
 
\end{proof}
Note we don't require the stronger condition that $\log \norm{A_t -\mathcal{E}_t}_\infty < 0,$ for all $t$. Unlike the deterministic case, the random case allows considerable flexibility. Neither self-belief $a_{ii} >0 $ nor positive learning $\epsilon_i$ is required for all times.  However, there must be some interaction and learning for beliefs to converge. As matrix products don't commute, if we were to follow the full recursion in any of our dynamics the result would be long matrix products. Random matrix products and dynamics are an active area of research not only in mathematics but also in physics \cite{diaconis1999iterated, crisanti2012products, bruneau2010infinite, garnerone2010typicality}. While the random case is certainly interesting, in this article our focus is on the first steps of modelling interaction and learning dynamics.

\subsection{Consensus with an unknown leader} \label{CDS2}
One criticism of model \eqref{eqn:model1} is that feedback, even if it is not perfect, has to be learned. In practice, there might not be a helpful mechanism that provides feedback. An alternative is to have an unknown leader embedded in the set of traders.  The agents are unsure who the leader is but by taking averages of other traders, they all arrive at the opinion of the leader. In markov chain theory, such behaviour is called an absorbing state. The leader guides the system to the true value. We assume that the \emph{identity} of the leader is unknown to all agents.\vspace{1mm}

Without loss of generality, we assume that the first agent (with corresponding opinion $x_t^1$) is the leader; it follows that $x_1^1=\bar{\sigma}$, $a_{1i}=0$, $i \in \{2,\cdots,n\}$, and $a_{11}=1$. Then, in this configuration, the opinion dynamics is given by
\arraycolsep=1.2pt\def\arraystretch{1.4}
\begin{equation} \label{leader_scalar}
X_t=A X_{t-1}, \hspace{1mm} A = \begin{pmatrix}
                                  1 & 0 & \ldots & 0 \\
                                  a_{21}& a_{22} & \ldots & a_{2n}\\
                                  \vdots & \vdots & \ldots & \vdots\\
                                  a_{n1}& a_{n2} & \ldots & a_{nn}
                                \end{pmatrix} =: \begin{pmatrix}
                                                   1 & \mathbf{0} \\
                                                   * & \tilde{A}
                                                 \end{pmatrix},
\end{equation}
with $a_{ij} \geq 0$, $\sum_{j=1}^{n} a_{ij}=1$, $a_{ii} > 0$ for all $1\leq i \leq n$, and for at least one $i$, $\sum_{j=2}^{n} a_{ij}<1$.
\begin{theorem}\label{thm2}
Consider the opinion dynamics \eqref{leader_scalar} and assume that the matrix $\tilde{A}$ is substochastic and irreducible. It holds that $\lim_{t \to \infty} X_t = \bar{\sigma}\mathbf{1}_{n}$, i.e., consensus to $\bar{\sigma}$ is reached.
\end{theorem}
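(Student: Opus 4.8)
The plan is to exploit the leader--follower block structure of $A$. Partition $X_t = (x_t^1, Y_t^\top)^\top$, where $x_t^1$ is the leader's opinion and $Y_t \in \mathbb{R}^{n-1}$ collects the followers' opinions. From the first row of $A$ the leader obeys $x_t^1 = x_{t-1}^1$, so $x_t^1 = x_1^1 = \bar{\sigma}$ for all $t$: the leader is an absorbing state pinned at the true volatility. Writing $b := (a_{21},\ldots,a_{n1})^\top$ for the first column of the lower block, the followers then satisfy the affine recursion $Y_t = \tilde{A} Y_{t-1} + b\,\bar{\sigma}$, since the term coupling to the leader is now a constant input.

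The crux is to show $\rho(\tilde{A}) < 1$. Here I would invoke Perron--Frobenius: for a nonnegative irreducible matrix the spectral radius is bounded above by the maximum row sum, with equality only when all row sums coincide. Since $\tilde{A}$ is substochastic, every row sum is $\leq 1$, and the hypothesis that $\sum_{j=2}^{n} a_{ij} < 1$ for at least one $i$ ensures the row sums are not all equal to $1$. Hence $\rho(\tilde{A}) \leq \max_i(\text{row sum}) \leq 1$ cannot be attained at $1$, giving $\rho(\tilde{A}) < 1$. I expect this to be the main obstacle, as it is where both irreducibility and the strict substochasticity hypothesis are genuinely used; the remaining steps are essentially bookkeeping.

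With $\rho(\tilde{A}) < 1$ in hand, $\tilde{A}^{t-1} \to \mathbf{0}$ and the Neumann series $\sum_{j=0}^{\infty} \tilde{A}^j = (I_{n-1} - \tilde{A})^{-1}$ converges, exactly as in the proof of Theorem~\ref{thm1}. Solving the affine recursion then gives $Y_t = \tilde{A}^{t-1} Y_1 + \big(\sum_{j=0}^{t-2} \tilde{A}^j\big) b\,\bar{\sigma} \to (I_{n-1} - \tilde{A})^{-1} b\,\bar{\sigma}$.

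Finally I would identify the limit using row-stochasticity of $A$. For each follower row $i$ one has $\sum_{j=2}^{n} a_{ij} + a_{i1} = 1$, which in vector form reads $\tilde{A}\mathbf{1}_{n-1} + b = \mathbf{1}_{n-1}$, i.e. $(I_{n-1}-\tilde{A})\mathbf{1}_{n-1} = b$. Multiplying by the (now invertible) $(I_{n-1}-\tilde{A})^{-1}$ yields $(I_{n-1}-\tilde{A})^{-1} b = \mathbf{1}_{n-1}$, so $Y_t \to \bar{\sigma}\mathbf{1}_{n-1}$. Combined with $x_t^1 \equiv \bar{\sigma}$, this gives $X_t \to \bar{\sigma}\mathbf{1}_n$, establishing consensus to $\bar{\sigma}$.
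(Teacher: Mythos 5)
Your proof is correct, and it reaches the same destination by a slightly different route. The paper changes coordinates via a matrix $M$ to the error variables $e_t^j = x_t^1 - x_t^j$, which turns the follower dynamics into the \emph{homogeneous} recursion $e_t = \tilde{A}e_{t-1}$, so that $\rho(\tilde{A})<1$ alone forces $e_t \to \mathbf{0}$ and no limit needs to be identified. You instead keep the followers' dynamics in the \emph{affine} form $Y_t = \tilde{A}Y_{t-1} + b\bar{\sigma}$, sum the Neumann series, and then must separately verify that the fixed point $(I_{n-1}-\tilde{A})^{-1}b\,\bar{\sigma}$ equals $\bar{\sigma}\mathbf{1}_{n-1}$ via the row-stochasticity identity $(I_{n-1}-\tilde{A})\mathbf{1}_{n-1}=b$ --- an extra (correct) computation that the paper's choice of coordinates renders unnecessary, but which mirrors exactly how the paper handles the feedback model in Theorem~\ref{thm1}. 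The other difference is that the paper simply cites Lemma~6.28 of \cite{Salinelli2014} for $\rho(\tilde{A})<1$, whereas you derive it from the Perron--Frobenius row-sum bounds for irreducible nonnegative matrices (equality with the maximal row sum only when all row sums coincide); your argument is a valid self-contained substitute and correctly isolates where irreducibility and the strict inequality $\sum_{j=2}^{n}a_{ij}<1$ for some $i$ are used. Both proofs hinge on the same fact, $\rho(\tilde{A})<1$, so the approaches are equivalent in substance; yours is marginally longer but more explicit about why the limit is $\bar{\sigma}\mathbf{1}_n$ and about the spectral-radius mechanism.
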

\begin{proof}
Define the invertible matrix $M \in \mathbb{R}^{n \times n}$
\arraycolsep=1.2pt\def\arraystretch{1.2}
\[M :=\begin{pmatrix}
1&&\mathbf{0}\\
\mathbf{1}_{n-1}&&-I_{n-1}
\end{pmatrix}.\]
Introduce the set of coordinates $\tilde{X}_{t-1}:= MX_{t-1}$. Note that $\tilde{x}_{t-1}^1 = x_{t-1}^1$, $\tilde{x}_{t-1}^2 = x_{t-1}^1 - x_{t-1}^2, \ldots, \tilde{x}_{t-1}^{n} = x_{t-1}^1 - x_{t-1}^n$. Hence, if the error vector $e_{t-1} := (\tilde{x}_{t-1}^2,\ldots,\tilde{x}_{t-1}^n)^{\top} = \mathbf{0}_{n-1}$, then consensus to $x_{t}^1 = \bar{\sigma}$ is reached. Note that
\arraycolsep=.5pt\def\arraystretch{1.4}
\[
MAM^{-1} = \begin{pmatrix}
1&&*\\
\mathbf{0}&& \tilde{A}
\end{pmatrix} ,\]
where $\mathbf{0}$ denotes the zero vector of appropriate dimensions and $\tilde{A}$ as defined in (\ref{leader_scalar}). By construction, $\tilde{X}_{t-1}:= MX_{t-1} \rightarrow \tilde{X}_{t} = MX_{t} = MAX_{t-1} = MAM^{-1}\tilde{X}_{t-1}$; hence, the consensus error $e_{t}$ satisfies the following difference equation
\begin{align}\label{error_dyn}
\tilde{X}_{t} =  MAM^{-1}\tilde{X}_{t-1} = \begin{pmatrix} 1&&*\\ \mathbf{0}&& \tilde{A} \end{pmatrix} \tilde{X}_{t-1} \implies e_{t} = \tilde{A} e_{t-1},
\end{align}
and the solution of $e_{t}$ is then given by
$e_{t} = \tilde{A}^te_{1}$.

Because for at least one $i$, $\sum_{j=2}^{n} a_{ij}<1$ and $\tilde{A}$ is substochastic and irreducible, the spectral radius $\rho(\tilde{A})<1$, see Lemma 6.28 in \cite{Salinelli2014}; it follows that $\lim_{t \rightarrow \infty} \tilde{A}^t = \mathbf{0}$. Therefore, $\lim_{t \rightarrow \infty} e_{t} = \mathbf{0}$ and the assertion follows.
\end{proof}

\begin{corollary}\label{cor2}
Let $\norm{\cdot}_*$ denote some matrix norm such that $\norm{\tilde{A}}_*<1$ (such a norm always exists because $\rho(\tilde{A})<1$ under the conditions of Theorem \ref{thm2}). Then, consensus to $\bar{\sigma}$ is reached exponentially with the convergence rate given by $\norm{\tilde{A}}_*$, i.e. $\max_{i}\{\left| x_t^i - \bar{\sigma}\right|\} \leq C \norm{\tilde{A}}_*^{t-1} \norm{X_{1} - \bar{\sigma}\mathbf{1}_n}_\infty$, for $i \in \{1,\ldots,n\}$ and some positive constant $C \in \Real_{>0}$.
\end{corollary}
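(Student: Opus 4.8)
The plan is to reuse the reduced error recursion already established in the proof of Theorem \ref{thm2} and then convert the abstract contraction into a concrete infinity-norm bound. Recall that under the change of coordinates $\tilde{X}_t = MX_t$ the consensus error $e_t = (\tilde{x}_t^2,\ldots,\tilde{x}_t^n)^\top$ satisfies the reduced recursion $e_t = \tilde{A}^{t-1} e_1$, and that the leader's opinion is frozen, $x_t^1 = \bar{\sigma}$ for all $t$ (since $a_{11}=1$ and $a_{1j}=0$ for $j \geq 2$). This frozen-leader fact is what lets me identify the error with the quantity of interest: for each $i \geq 2$ we have $x_t^i - \bar{\sigma} = x_t^i - x_t^1 = -(e_t)_{i-1}$, while $x_t^1 - \bar{\sigma} = 0$. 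First I would record the resulting identity $\max_i |x_t^i - \bar{\sigma}| = \norm{e_t}_\infty$, so that it suffices to control $\norm{e_t}_\infty$.

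The next ingredient is the standard spectral fact that whenever $\rho(\tilde{A}) < 1$, for every sufficiently small $\varepsilon > 0$ there is a vector norm whose induced matrix norm $\norm{\cdot}_*$ satisfies $\norm{\tilde{A}}_* \leq \rho(\tilde{A}) + \varepsilon < 1$ (see \cite{Horn:2012:MA:2422911}); Theorem \ref{thm2} already guarantees $\rho(\tilde{A}) < 1$, so such a norm exists, justifying the parenthetical claim in the statement. Applying submultiplicativity of the induced norm to $e_t = \tilde{A}^{t-1}e_1$ gives $\norm{e_t}_* \leq \norm{\tilde{A}}_*^{t-1}\norm{e_1}_*$, which already displays the asserted exponential rate $\norm{\tilde{A}}_*$.

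The hard part, and really the only delicate point, is reconciling the two norms: the contraction is clean in $\norm{\cdot}_*$, but the corollary is phrased in the infinity norm. Here I would invoke equivalence of norms on $\Real^{n-1}$, i.e. the existence of constants $c_1, c_2 > 0$ with $\norm{v}_\infty \leq c_1 \norm{v}_*$ and $\norm{v}_* \leq c_2 \norm{v}_\infty$ for all $v$. Chaining these through the bound above yields $\norm{e_t}_\infty \leq c_1 c_2\, \norm{\tilde{A}}_*^{t-1}\norm{e_1}_\infty$, and then I would bound the initial error by $\norm{e_1}_\infty = \max_{i \geq 2} |\bar{\sigma} - x_1^i| \leq \norm{X_1 - \bar{\sigma}\mathbf{1}_n}_\infty$. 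Setting $C := c_1 c_2$ gives exactly $\max_i |x_t^i - \bar{\sigma}| \leq C \norm{\tilde{A}}_*^{t-1}\norm{X_1 - \bar{\sigma}\mathbf{1}_n}_\infty$.

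I would close by explaining why a constant $C$ must appear here, in contrast to Corollary \ref{cor1} where it was absent. In the feedback model one verifies $\norm{A - \mathcal{E}}_\infty < 1$ directly in the infinity norm, so the rate and the error bound live in the same norm and no conversion constant is incurred; here we only know $\tilde{A}$ contracts in some a priori unknown norm, and passing to the infinity norm costs the harmless equivalence constants absorbed into $C$. Since $C$ depends only on the chosen norm (through $c_1, c_2$) and not on $t$ or the initial condition, the bound is genuinely exponential with rate $\norm{\tilde{A}}_*$, as claimed.
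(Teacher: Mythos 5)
Your proposal is correct and follows essentially the same route as the paper: the reduced recursion $e_t = \tilde{A}^{t-1}e_1$ from Theorem \ref{thm2}, Lemma 5.6.10 of \cite{Horn:2012:MA:2422911} to obtain a norm with $\norm{\tilde{A}}_*<1$, and equivalence of norms to absorb the conversion into the constant $C$. If anything, your chaining ($\norm{e_t}_* \leq \norm{\tilde{A}}_*^{t-1}\norm{e_1}_*$ first, then converting both sides to the infinity norm) is executed more carefully than the paper's own intermediate step, and your explicit identification $\max_i|x_t^i-\bar{\sigma}| = \norm{e_t}_\infty$ via the frozen leader is a detail the paper leaves implicit.
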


\begin{proof}
	See  Lemma 5.6.10 in \cite{Horn:2012:MA:2422911} on how to construct such a $\norm{\cdot}_*$. Now consider the consensus error $e_t$ defined in the proof of Theorem \ref{thm2}, which evolves according to the difference equation \eqref{error_dyn}. It follows that $e_t = \tilde{A}^{t-1}e_1$, where $e_1$ denotes the initial consensus error. Under the assumptions of Theorem \ref{thm2}, $\rho(\tilde{A})<1$. By Lemma 5.6.10 in \cite{Horn:2012:MA:2422911}, $\rho(\tilde{A})<1$ implies that there exists some matrix norm, say $\norm{\cdot}_*$, such that $\norm{\tilde{A}}_*<1$. We restate the error with norms and obtain $\norm{e_t}_\infty \leq \norm{\tilde{A}}^{t-1}_\infty \norm{e_1}_\infty$. Because all norms are equivalent in finite dimensional vector spaces (see Chapter 5 in \cite{Horn:2012:MA:2422911}), $\norm{e_t}_\infty \leq \norm{\tilde{A}}^{t-1}_\infty \norm{e_1}_\infty$ $\implies$ $\norm{e_t}_\infty \leq C \norm{\tilde{A}}^{t-1}_*  \norm{e_1}_\infty$ for some positive constant $C \in \Real_{>0}$. As $\norm{\tilde{A}}_*<1$, the norm of the consensus error $\norm{e_t}_\infty$ converges to zero exponentially with rate $\norm{\tilde{A}}_*$.
\end{proof}

\section{Consensus (vectored agent dynamics)}\label{CDV}
In this section, we suppose that agents have beliefs over a range of strikes.  Thus, each agent's opinion of the volatility curve is a vector with each entry corresponding to a particular strike. Typically, in markets, options are quoted for at-the-money (atm) $K=S_0$ and for two further strikes left of and right of the atm level. Here, we examine the case of $k$ strikes and $n$ agents, i.e., each agent $i$ now has $k$ quotes for $k$ different moneyness levels. In this configuration, the true volatility is $\bar{\sigma} :=[\sigma_1,\ldots,\sigma_k]^{\top}\in\mathbb{R}^k$. See figure \ref{fig:smile} (b).

\subsection{Consensus with Feedback}

Again, we assume that each agent takes a weighted average of other agents' opinions and updates its volatility estimate \emph{vector} for the next period, i.e., at time $t$, the opinion $x_t^i \in \mathbb{R}^k$ of the $i$-th agent is given by
	\begin{equation} \label{eqn:1pm}
	x_t^i= \sum_{j=1}^n a_{ij} x_{t-1}^j + \epsilon_i(\bar{\sigma} - x_{t-1}^i), \hspace{1mm} t \in \mathbb{N},
	\end{equation}
where $\epsilon_i \in (0,1)$ denotes the \emph{learning coefficient} of agent $i$, $x_{t-1}^j \in \mathbb{R}^k$ is the opinion of agent $j$ at time $(t-1)$, and $a_{ij} \geq 0$ denotes the opinion weights for the $n$ investors with $\sum_{j=1}^{n} a_{ij}=1$ and $a_{ii} > 0$ for all $1\leq i \leq n$. In this case, the stacked vector of opinions is $X_t:=(x_t^1,\ldots,x_t^n)^{\top}$, $X_t \in \mathbb{R}^{kn}$. The opinion dynamics of the $n$ agents can then be written in matrix form as follows
	\begin{equation} \label{eqn:1m}
	X_t=(A \otimes I_k) X_{t-1} + (\mathcal{E} \otimes I_k)(\mathbf{1}_{n} \otimes \bar{\sigma}  - X_{t-1}),
	\end{equation}
where $A = a_{ij} \in \mathbb{R}^{n \times n}$ is a \emph{row-stochastic matrix}, $\mathcal{E} = \text{diag}(\epsilon_1,\ldots,\epsilon_n)$, and $\otimes$ denotes Kronecker product. We have the following result.

\begin{theorem}\label{thm3}
Consider the opinion dynamics in \eqref{eqn:1m} and assume that $\epsilon_i \in (0,a_{ii})$, $i=\{1,\ldots,n\}$; then, consensus to $\mathbf{1}_{n} \otimes \bar{\sigma}$ (with $\bar{\sigma} = [\sigma_1,\ldots,\sigma_k]^{\top} \in \mathbb{R}^k$) is reached, i.e., $\lim_{t \to \infty} X_t = \mathbf{1}_{n} \otimes \bar{\sigma}$.
\end{theorem}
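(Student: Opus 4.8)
The plan is to reduce the vectored dynamics \eqref{eqn:1m} to the scalar case already handled in Theorem \ref{thm1} by exploiting the structure of the Kronecker product. The essential observation is that \eqref{eqn:1m} can be rewritten as
\[
X_t = \big((A - \mathcal{E}) \otimes I_k\big) X_{t-1} + (\mathcal{E} \otimes I_k)(\mathbf{1}_n \otimes \bar{\sigma}),
\]
using the mixed-product identity $(A\otimes I_k) - (\mathcal{E}\otimes I_k) = (A-\mathcal{E})\otimes I_k$. This is formally identical to the scalar recursion \eqref{eqn:model1}, with the scalar matrix $A-\mathcal{E}$ replaced by $(A-\mathcal{E})\otimes I_k$, so the same solution-plus-geometric-series argument should go through verbatim once the relevant spectral and inversion facts are lifted through $\otimes$.

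The key steps, in order, are as follows. First I would write the closed-form solution
\[
X_t = \big((A-\mathcal{E})\otimes I_k\big)^{t-1} X_1 + \sum_{j=0}^{t-2}\big((A-\mathcal{E})\otimes I_k\big)^{j}(\mathcal{E}\otimes I_k)(\mathbf{1}_n \otimes \bar{\sigma}),
\]
again by direct verification. Second, I would invoke the eigenvalue property of Kronecker products: the spectrum of $(A-\mathcal{E})\otimes I_k$ is $\{\lambda_i : \lambda_i \in \mathrm{spec}(A-\mathcal{E})\}$ (each eigenvalue of $A-\mathcal{E}$ repeated $k$ times), so $\rho\big((A-\mathcal{E})\otimes I_k\big) = \rho(A-\mathcal{E}) < 1$, where the last inequality is exactly the Gershgorin bound established in the proof of Theorem \ref{thm1} under the hypothesis $\epsilon_i \in (0,a_{ii})$. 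This immediately gives $\lim_{t\to\infty}\big((A-\mathcal{E})\otimes I_k\big)^{t-1} X_1 = \mathbf{0}$ and convergence of the geometric series to $\big(I_{kn} - (A-\mathcal{E})\otimes I_k\big)^{-1}(\mathcal{E}\otimes I_k)(\mathbf{1}_n\otimes\bar{\sigma})$.

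Third, I would identify the limit. Using $I_{kn} = I_n \otimes I_k$ and the mixed-product rule, $I_{kn} - (A-\mathcal{E})\otimes I_k = (I_n - A + \mathcal{E})\otimes I_k$, whose inverse is $(I_n - A + \mathcal{E})^{-1}\otimes I_k$. The scalar identity $(I_n - A + \mathcal{E})^{-1}\mathcal{E}\mathbf{1}_n = \mathbf{1}_n$ from the proof of Theorem \ref{thm1} then lifts: applying $\big((I_n-A+\mathcal{E})^{-1}\mathcal{E}\big)\otimes I_k$ to $\mathbf{1}_n\otimes\bar{\sigma}$ and using $(P\otimes Q)(u\otimes v) = (Pu)\otimes(Qv)$ yields $\big((I_n-A+\mathcal{E})^{-1}\mathcal{E}\mathbf{1}_n\big)\otimes\bar{\sigma} = \mathbf{1}_n\otimes\bar{\sigma}$, which is the claimed consensus point.

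I do not anticipate a genuine obstacle here; the argument is essentially bookkeeping with the three standard Kronecker identities (mixed product, eigenvalues of $M\otimes I$, and inverse of $M\otimes I$). The one place that warrants care is ensuring the $\mathbf{1}_n\otimes\bar{\sigma}$ ordering of the stacked state is consistent with how the Kronecker factors act — i.e., that the block structure $(A\otimes I_k)$ really does apply the mixing weights $a_{ij}$ across agents while leaving each agent's $k$-dimensional strike vector intact. Once that indexing convention is fixed (matching the definition of $X_t$ as the stacked vector of the $x_t^i\in\mathbb{R}^k$), every scalar step of Theorem \ref{thm1} transfers without modification.
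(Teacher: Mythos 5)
Your proof is correct. The only difference from the paper is organizational: the paper's proof of Theorem \ref{thm3} does not solve the inhomogeneous recursion and sum a geometric series; instead it subtracts the fixed point first, defining the error $e_{t-1} := X_{t-1} - (\mathbf{1}_n \otimes \bar{\sigma})$ and showing it obeys the homogeneous recursion $e_t = ((A-\mathcal{E})\otimes I_k)\,e_{t-1}$, after which the conclusion follows from $\rho(A-\mathcal{E})<1$ alone. Your route is the one the paper uses for the scalar Theorem \ref{thm1}, lifted through the three Kronecker identities you list, and it requires the additional inversion step $\bigl(I_{kn} - (A-\mathcal{E})\otimes I_k\bigr)^{-1} = (I_n - A + \mathcal{E})^{-1}\otimes I_k$ together with the identity $(I_n - A + \mathcal{E})^{-1}\mathcal{E}\mathbf{1}_n = \mathbf{1}_n$; the paper's error-dynamics version needs only the more primitive fact $(A - I_n)\mathbf{1}_n = \mathbf{0}_n$, which is why it is slightly leaner and also feeds directly into the exponential-rate statement of Corollary \ref{cor3}. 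Both arguments rest on exactly the same two pillars — the Gershgorin bound $\rho(A-\mathcal{E})<1$ under $\epsilon_i\in(0,a_{ii})$ and the fact that the spectrum of $M\otimes I_k$ is that of $M$ with multiplicity $k$ — so nothing is gained or lost in generality; your version just carries a bit more bookkeeping to identify the limit of the particular solution. Your closing caveat about the ordering convention $\mathbf{1}_n\otimes\bar{\sigma}$ versus $\bar{\sigma}\otimes\mathbf{1}_n$ is well placed and consistent with the paper's stacking of $X_t$.
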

\begin{proof} Define the error sequence $e_{t-1}:= X_{t-1} - (\mathbf{1}_{n} \otimes \bar{\sigma})$. Note that $e_{t-1}= \mathbf{0}$ implies that consensus to $(\mathbf{1}_{n} \otimes \bar{\sigma})$ is reached. Given the opinion dynamics \eqref{eqn:1m}, the evolution of the error $e_{t-1}$ satisfies the following difference equation
\begin{align*}
e_{t}&= ((A - \mathcal{E}) \otimes I_k) X_{t-1} + ((\mathcal{E} \otimes I_k)-I_{kn})(\mathbf{1}_{n} \otimes \bar{\sigma})\\
     &= ((A - \mathcal{E}) \otimes I_k) e_{t-1} - (\mathbf{1}_{n} \otimes \bar{\sigma}) + (A \otimes I_k) (\mathbf{1}_{n} \otimes \bar{\sigma})\\
     &= ((A - \mathcal{E}) \otimes I_k) e_{t-1} + ((A-I_n)\mathbf{1}_{n} \otimes \bar{\sigma}).
\end{align*}
It is easy to verify that, because $A$ is stochastic, $(A-I_n)\mathbf{1}_{n} = \mathbf{0}_n$. Then, the error dynamics simplifies to
\begin{align}\label{error_dyn_Feedback}
e_{t} = ((A - \mathcal{E}) \otimes I_k) e_{t-1},
\end{align}
and consequently, the solution $e_{t}$ of \eqref{error_dyn_Feedback} is given by  $e_{t} = ((A - \mathcal{E}) \otimes I_k)^t e_1$. By properties of the Kronecker product and Gershgorin's circle theorem, the spectral radius $\rho(A - \mathcal{E})<1$ for $\epsilon_i \in (0,a_{ii})$. It follows that $\lim_{t \rightarrow \infty} ((A - \mathcal{E}) \otimes I_k)^t = \mathbf{0}$, see \cite{Horn:2012:MA:2422911}. Therefore, $\lim_{t \rightarrow \infty} e_{t} = \mathbf{0}_{kn}$ and the assertion follows.
\end{proof}

\begin{corollary}\label{cor3}
	Consensus to $\bar{\sigma}$ is reached exponentially with the convergence rate given by $\norm{(A - \mathcal{E}) \otimes I_k)}_\infty$, i.e., $\norm{X_t - (\mathbf{1}_n\otimes \bar{\sigma})}_\infty \leq \norm{(A - \mathcal{E}) \otimes I_k)}_\infty^{t-1} \norm{X_{1} - (\mathbf{1}_n\otimes\bar{\sigma})}_\infty$.
\end{corollary}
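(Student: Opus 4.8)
The plan is to lift the argument of Corollary~\ref{cor1} through the Kronecker product, reusing the error dynamics already derived inside the proof of Theorem~\ref{thm3}. First I would recall the error sequence $e_{t-1} := X_{t-1} - (\mathbf{1}_n \otimes \bar{\sigma})$ and the homogeneous difference equation $e_t = ((A - \mathcal{E}) \otimes I_k)\,e_{t-1}$ established there. Iterating from the initial error $e_1 = X_1 - (\mathbf{1}_n \otimes \bar{\sigma})$ gives the closed form $e_t = ((A - \mathcal{E}) \otimes I_k)^{t-1} e_1$. Taking the vector infinity norm and invoking submultiplicativity of the induced matrix norm yields $\norm{e_t}_\infty \leq \norm{(A - \mathcal{E}) \otimes I_k}_\infty^{\,t-1} \norm{e_1}_\infty$, which is precisely the asserted bound; it remains only to check that the norm factor is strictly below one so that the right-hand side decays geometrically.

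The crux of the argument is to evaluate $\norm{(A - \mathcal{E}) \otimes I_k}_\infty$ and identify it with the quantity already bounded in Corollary~\ref{cor1}. I would use that the norm induced by the vector infinity norm is the maximum absolute row sum. For any $B \in \mathbb{R}^{n \times n}$, the Kronecker product $B \otimes I_k$ is block-structured with $(i,j)$ block equal to $b_{ij} I_k$, so any single row inside block-row $i$ collects exactly one entry $b_{ij}$ from each block (on the diagonal of the corresponding $I_k$) and is otherwise zero; its absolute row sum is therefore $\sum_{j=1}^n |b_{ij}|$, the $i$-th absolute row sum of $B$. Maximizing over rows gives the identity $\norm{B \otimes I_k}_\infty = \norm{B}_\infty$, and applying it with $B = A - \mathcal{E}$ yields $\norm{(A - \mathcal{E}) \otimes I_k}_\infty = \norm{A - \mathcal{E}}_\infty$.

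Finally I would import the computation from Corollary~\ref{cor1}: under the hypothesis $\epsilon_i \in (0, a_{ii})$, row-stochasticity of $A$ together with $\epsilon_i < a_{ii}$ gives $\norm{A - \mathcal{E}}_\infty = \max_i \big( \sum_{j=1,\, j \neq i}^{n} |a_{ij}| + |a_{ii} - \epsilon_i| \big) < 1$. Combining with the norm identity above, $\norm{(A - \mathcal{E}) \otimes I_k}_\infty = \norm{A - \mathcal{E}}_\infty < 1$, so the bound $\norm{e_t}_\infty \leq \norm{(A - \mathcal{E}) \otimes I_k}_\infty^{\,t-1} \norm{e_1}_\infty$ forces $\norm{e_t}_\infty \to 0$ geometrically, establishing exponential convergence of $X_t$ to $\mathbf{1}_n \otimes \bar{\sigma}$ at rate $\norm{(A - \mathcal{E}) \otimes I_k}_\infty$.

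I expect no genuine obstacle, since the statement is essentially Corollary~\ref{cor1} transported by the Kronecker product. The only point demanding care is the norm identity $\norm{B \otimes I_k}_\infty = \norm{B}_\infty$, which should be justified by the direct row-sum computation above rather than merely asserted; alternatively one could combine the general inequality $\norm{B \otimes C}_\infty \leq \norm{B}_\infty \norm{C}_\infty$ with $\norm{I_k}_\infty = 1$ and a matching lower bound, but the explicit block calculation is cleaner and self-contained.
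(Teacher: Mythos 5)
Your proof is correct and follows exactly the route the paper intends: the paper omits the proof of Corollary~\ref{cor3} as ``very similar to previous corollaries,'' and your argument is precisely the adaptation of Corollary~\ref{cor1} via the error dynamics \eqref{error_dyn_Feedback} from Theorem~\ref{thm3}. Your explicit row-sum verification that $\norm{B \otimes I_k}_\infty = \norm{B}_\infty$ is a worthwhile addition, since it is the one step the paper leaves entirely implicit and is what guarantees the stated rate is strictly below one.
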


The proof of the above result is very similar to previous corollaries and is omitted.

\subsection{Consensus with an unknown leader}
Similarly to the scalar case; here, we assume that there is a leader driving all the other agents through the opinion matrix $A$. Again, without loss of generality, we assume that the first agent (with corresponding opinion $x_t^1 \in \mathbb{R}^k$) is the leader, $x_1^1=\bar{\sigma} =[\sigma_1,\ldots,\sigma_k]^{\top} \in \mathbb{R}^k$, $a_{1i}=0$, $i \in \{2,\cdots,n\}$, and $a_{11}=1$. Then, in this configuration, the opinion dynamics is given by
\arraycolsep=1.2pt\def\arraystretch{1.2}
\begin{equation} \label{leader_vectored}
X_t=(A \otimes I_k) X_{t-1}, \hspace{.5mm} A = \begin{pmatrix}
                                  1 & 0 & \ldots & 0 \\
                                  a_{21}& a_{22} & \ldots & a_{2n}\\
                                  \vdots & \vdots & \ldots & \vdots\\
                                  a_{n1}& a_{n2} & \ldots & a_{nn}
                                \end{pmatrix} =: \begin{pmatrix}
                                                   1 & \mathbf{0} \\
                                                   * & \tilde{A}
                                                 \end{pmatrix},
\end{equation}
with $a_{ij} \geq 0$, $\sum_{j=1}^{n} a_{ij}=1$, $a_{ii} > 0$ for all $1\leq i \leq n$, and for at least one $i$, $\sum_{j=2}^{n} a_{ij}<1$.

\begin{theorem}\label{thm4}
Consider the opinion dynamics \eqref{leader_vectored} and assume that the matrix $\tilde{A}$ is substochastic and irreducible; then, consensus to $\mathbf{1}_{n} \otimes \bar{\sigma}$ is reached, i.e., $\lim_{t \to \infty} X_t = \mathbf{1}_{n} \otimes \bar{\sigma}$.
\end{theorem}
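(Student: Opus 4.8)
The plan is to lift the argument of Theorem~\ref{thm2} to the vectored setting by tensoring the scalar coordinate change with $I_k$ and then invoking the mixed-product property of the Kronecker product. Concretely, I would reuse the invertible matrix
\[
M := \begin{pmatrix} 1 & \mathbf{0} \\ \mathbf{1}_{n-1} & -I_{n-1} \end{pmatrix}
\]
from the proof of Theorem~\ref{thm2} and pass to the transformed coordinates $\tilde{X}_{t-1} := (M \otimes I_k) X_{t-1}$. Since $(M \otimes I_k)^{-1} = M^{-1}\otimes I_k$, the mixed-product identity gives $(M\otimes I_k)(A\otimes I_k)(M^{-1}\otimes I_k) = (MAM^{-1})\otimes I_k$, and from the block factorization $MAM^{-1} = \begin{pmatrix} 1 & * \\ \mathbf{0} & \tilde{A}\end{pmatrix}$ already established in Theorem~\ref{thm2} we obtain the block-upper-triangular form $(MAM^{-1})\otimes I_k$ whose bottom-right block is $\tilde{A}\otimes I_k$.

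Next I would read off the error dynamics. In the new coordinates the first block equals the leader's opinion $x_t^1 \in \mathbb{R}^k$, while blocks $2,\ldots,n$ assemble into the consensus-error vector $e_{t} := (x_t^1 - x_t^2, \ldots, x_t^1 - x_t^n)^{\top} \in \mathbb{R}^{k(n-1)}$, so that $e_t = \mathbf{0}$ is equivalent to full agreement with the leader. Because the transformed transition matrix is block-upper-triangular, the error block decouples and satisfies $e_t = (\tilde{A}\otimes I_k) e_{t-1}$, hence $e_t = (\tilde{A}\otimes I_k)^{t-1} e_1$, exactly mirroring equation~\eqref{error_dyn} in the scalar proof.

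It then remains to control the spectral radius of $\tilde{A}\otimes I_k$. Its eigenvalues are the products of the eigenvalues of $\tilde{A}$ with the eigenvalue $1$ of $I_k$, so $\rho(\tilde{A}\otimes I_k) = \rho(\tilde{A})$. Under the hypotheses $\tilde{A}$ is substochastic and irreducible, whence by Lemma~6.28 of~\cite{Salinelli2014} we have $\rho(\tilde{A})<1$; therefore $\rho(\tilde{A}\otimes I_k)<1$ and $\lim_{t\to\infty}(\tilde{A}\otimes I_k)^t = \mathbf{0}$, so $e_t \to \mathbf{0}$. Finally, since $a_{11}=1$ and $a_{1i}=0$ for $i\geq 2$, the leader's block evolves as $x_t^1 = x_{t-1}^1$, so $x_t^1 \equiv \bar{\sigma}$ for all $t$; combined with $e_t\to\mathbf{0}$ this forces $x_t^i \to \bar{\sigma}$ for every $i$, i.e.\ $X_t \to \mathbf{1}_n\otimes\bar{\sigma}$.

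I do not expect a genuine obstacle here: the only point requiring care is the bookkeeping of the Kronecker structure --- verifying the mixed-product identity and the eigenvalue formula $\rho(\tilde{A}\otimes I_k)=\rho(\tilde{A})$ --- after which the scalar proof of Theorem~\ref{thm2} transfers essentially verbatim. An exponential-rate statement analogous to Corollary~\ref{cor2} would then follow from the same norm-equivalence argument applied to $\tilde{A}\otimes I_k$, using that $\rho(\tilde{A}\otimes I_k)<1$ guarantees a submultiplicative norm $\norm{\cdot}_*$ with $\norm{\tilde{A}\otimes I_k}_*<1$.
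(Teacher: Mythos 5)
Your proposal is correct and is precisely what the paper intends: the authors omit the proof of Theorem~\ref{thm4}, stating only that it ``follows the same line as the proof of Theorem~\ref{thm2},'' and your lift via $M\otimes I_k$, the mixed-product identity, and $\rho(\tilde{A}\otimes I_k)=\rho(\tilde{A})<1$ is exactly that argument carried out in full. No gaps; if anything, you supply the details the paper leaves implicit.
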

The proof of Theorem \ref{thm4} follows the same line as the proof of Theorem \ref{thm2} and it is omitted here.

\begin{corollary}\label{cor4}
Let $ \norm{\cdot}_*$ denote some matrix norm such that $\norm{\tilde{A}}_*<1$, then consensus to $\bar{\sigma}$ is reached exponentially with convergence rate $\norm{\tilde{A}\otimes I_k}_*$, i.e. $\norm{X_{t} - (\mathbf{1}_n \otimes \bar{\sigma})}_\infty \leq C \norm{\tilde{A}\otimes I_k}_*^{t-1} \norm{X_{1} - (\mathbf{1}_n \otimes \bar{\sigma})}_\infty$, for some positive constant $C \in \Real_{>0}$.
\end{corollary}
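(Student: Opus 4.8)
The plan is to lift the scalar decoupling argument from the proofs of Theorem \ref{thm2} and Corollary \ref{cor2} to the Kronecker setting, exploiting the fact that the change of coordinates commutes cleanly with the $I_k$ factor. First I would introduce the lifted transformation $M \otimes I_k$, where $M$ is the invertible matrix constructed in the proof of Theorem \ref{thm2}. Since $M$ and $I_k$ are invertible, $M \otimes I_k$ is invertible with inverse $M^{-1} \otimes I_k$. Setting $\tilde{X}_t := (M \otimes I_k) X_t$ and applying the mixed-product property of the Kronecker product to the dynamics \eqref{leader_vectored} gives
\[
\tilde{X}_t = (M \otimes I_k)(A \otimes I_k)(M^{-1} \otimes I_k)\tilde{X}_{t-1} = \big((MAM^{-1}) \otimes I_k\big)\tilde{X}_{t-1}.
\]
Because $MAM^{-1}$ is block upper triangular with diagonal blocks $1$ and $\tilde{A}$ (established in the proof of Theorem \ref{thm2}), the matrix $(MAM^{-1}) \otimes I_k$ inherits the same block upper triangular structure, now with diagonal blocks $I_k$ and $\tilde{A} \otimes I_k$. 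Isolating the lower portion of $\tilde{X}_t$ as the stacked error $e_t \in \Real^{(n-1)k}$, whose $j$-th block equals $x_t^1 - x_t^{j+1}$, yields the decoupled recursion $e_t = (\tilde{A} \otimes I_k)\, e_{t-1}$, and hence $e_t = (\tilde{A} \otimes I_k)^{t-1} e_1$.

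Next I would argue geometric decay exactly as in Corollary \ref{cor2}. The eigenvalues of $\tilde{A} \otimes I_k$ are precisely the eigenvalues of $\tilde{A}$, each with multiplicity $k$, so $\rho(\tilde{A} \otimes I_k) = \rho(\tilde{A}) < 1$ under the hypotheses of Theorem \ref{thm4}. By Lemma 5.6.10 in \cite{Horn:2012:MA:2422911}, there exists a matrix norm $\norm{\cdot}_*$ with $\norm{\tilde{A} \otimes I_k}_* < 1$, so that $\norm{e_t}_* \leq \norm{\tilde{A} \otimes I_k}_*^{t-1}\norm{e_1}_*$. Equivalence of norms on the finite-dimensional space $\Real^{(n-1)k}$ then upgrades this to $\norm{e_t}_\infty \leq C\, \norm{\tilde{A} \otimes I_k}_*^{t-1}\norm{e_1}_\infty$ for some $C \in \Real_{>0}$.

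Finally I would translate the error back to the claimed quantity. Since the leader is stationary, $x_t^1 = \bar{\sigma}$ for all $t$, so each block $x_t^1 - x_t^{j+1}$ of $e_t$ equals $\bar{\sigma} - x_t^{j+1}$; meanwhile the first block of $X_t - (\mathbf{1}_n \otimes \bar{\sigma})$ vanishes and its remaining blocks are $x_t^{j+1} - \bar{\sigma}$. Because the infinity norm is sign-invariant and unaffected by the appended zero block, $\norm{X_t - (\mathbf{1}_n \otimes \bar{\sigma})}_\infty = \norm{e_t}_\infty$, and in particular $\norm{e_1}_\infty = \norm{X_1 - (\mathbf{1}_n \otimes \bar{\sigma})}_\infty$, which delivers the stated bound. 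The only non-routine step is the structural bookkeeping in the first paragraph, namely verifying that $M \otimes I_k$ block-triangularizes $A \otimes I_k$ and that the spectral radius is unchanged by the $I_k$ factor; both reduce to standard Kronecker-product identities, so I anticipate no serious obstacle.
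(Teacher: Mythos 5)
Your proof is correct and follows exactly the route the paper intends: the paper omits the proofs of Theorem \ref{thm4} and Corollary \ref{cor4}, stating only that they follow the scalar arguments, and your lift of $M$ to $M\otimes I_k$ via the mixed-product property, the identity $\rho(\tilde{A}\otimes I_k)=\rho(\tilde{A})$, and the norm-equivalence step are precisely the intended completion of Corollary \ref{cor2}'s argument. The only cosmetic point is that you (sensibly) work with a norm satisfying $\norm{\tilde{A}\otimes I_k}_*<1$ rather than the literal hypothesis $\norm{\tilde{A}}_*<1$, a mismatch already present in the corollary's statement.
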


\section{Numerical Simulations}

\begin{figure}[t]
	\centering
	\includegraphics[scale=0.6]{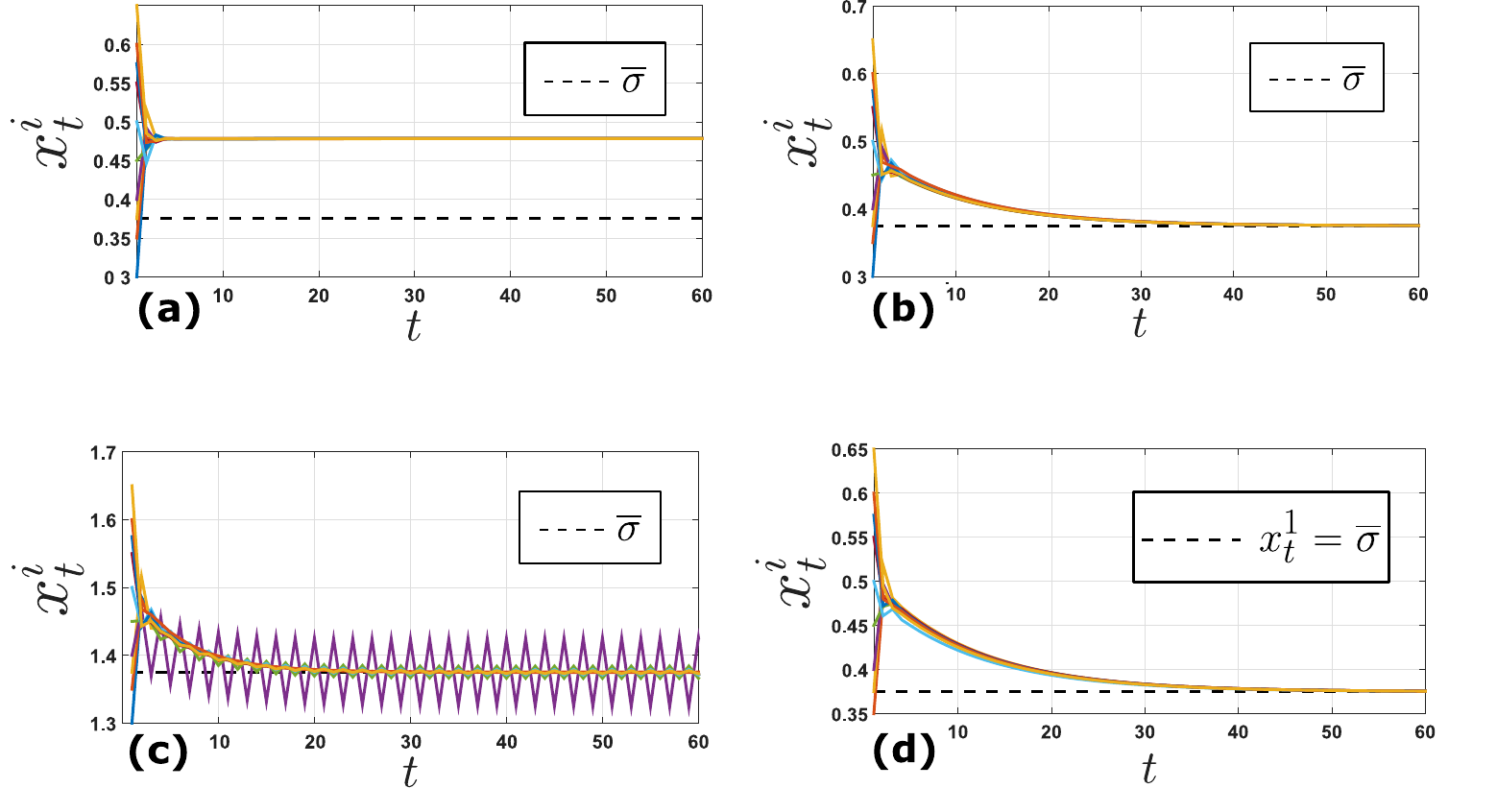}
	\caption{Evolution of the agents' dynamics \eqref{eqn:model1}: \textbf{(a)} without learning, \textbf{(b)} with learning and $\epsilon_i$ satisfying the conditions of Theorem \ref{thm1}, \textbf{(c)} with learning and $\epsilon_i$ \emph{not} satisfying the conditions of Theorem \ref{thm1}, and \textbf{(d)} Evolution of the agents' dynamics with leader \eqref{leader_scalar}.}\label{FigSim1}
\end{figure}
Consider the opinion dynamics with feedback \eqref{eqn:model1} with ten agents (i.e., $n = 10$) $\bar{\sigma} = 0.375$, and initial condition 
\[
X_1 = (0.3, 0.35, 0.37, 0.4, 0.45, 0.5, 0.55, 0.57, 0.6, 0.65)^{\top}.
\]
In both exchange-based and OTC markets it is easy to ascertain who the main market-makers are for options on single stock or commodity \cite{gueant2016financial, bouchaud2018trades}. Option market-makers are usually investment banks and big trading houses. In this sense, the number of players is not large and thus the models developed always have a finite number of agents, $N=10$.

Figure \ref{FigSim1} depicts the obtained simulation results for different values of the learning parameters $\epsilon_i$, $i=1,\ldots,10$. Specifically, Figure \ref{FigSim1} (a) shows results without learning, i.e, $\epsilon_i = 0$ (here there is no consensus to $\bar{\sigma}$), Figure \ref{FigSim1} (b) depicts the results for $\epsilon_i = 0.9a_{ii}$. As stated in Theorem \ref{thm1}, consensus to $\bar{\sigma}$ is reached. Figure \ref{FigSim1}(c) shows results for $\epsilon_i = 0.9a_{ii} + 0.94 b_i$ with $b_4 = 1$ and $b_i=0$ otherwise, $i=1,\ldots,10$. Note that, in this case, the value of $\epsilon_4$ violates the condition of Theorem \ref{thm1} (i.e., $\epsilon_4 \notin (0,a_{44})$) and, as expected, consensus is not reached. Next, consider the opinion dynamics with leader \eqref{leader_scalar} with $n = 10$ and initial condition 

\[
X_1 = (\bar{\sigma}, 0.35, 0.37, 0.4, 0.45, 0.5, 0.55, 0.57, 0.6, 0.65)^{\top}. 
\]

For the leader case, the opinion weights matrix is constructed by replacing the first row of $A$ by $(1,0,\ldots,0)$. The corresponding matrix $\tilde{A}$ (defined in \ref{leader_scalar}) is substochastic and irreducible, and $\sum_{i=2}^{i=10} a_{ij}<1$, $j=1,\ldots,10$. Hence, all the conditions of Theorem \ref{thm2} are satisfied and consensus to $\bar{\sigma} = 0.375$ is expected. Figure \ref{FigSim1}(d) shows the corresponding simulation results. Finally, Figure \ref{FigSim3} shows the evolution of the vectored opinion dynamics (\ref{eqn:1m}) with $n = 10$ and $k=3$ (i.e., ten three dimensional agents), matrix $A$ as in the case with feedback, (vectored) volatility $\bar{\sigma} = (0.67, 0.22, 0.88)^{\top}$, learning parameters $\epsilon_i = 0.9a_{ii}$ for $a_{ii}$ as in $A$, and initial condition $\mathbf{1}_k \otimes X_1$ with $X_1$ as in the first experiment above.


\begin{figure}[t]
	\centering
	\includegraphics[scale=0.35]{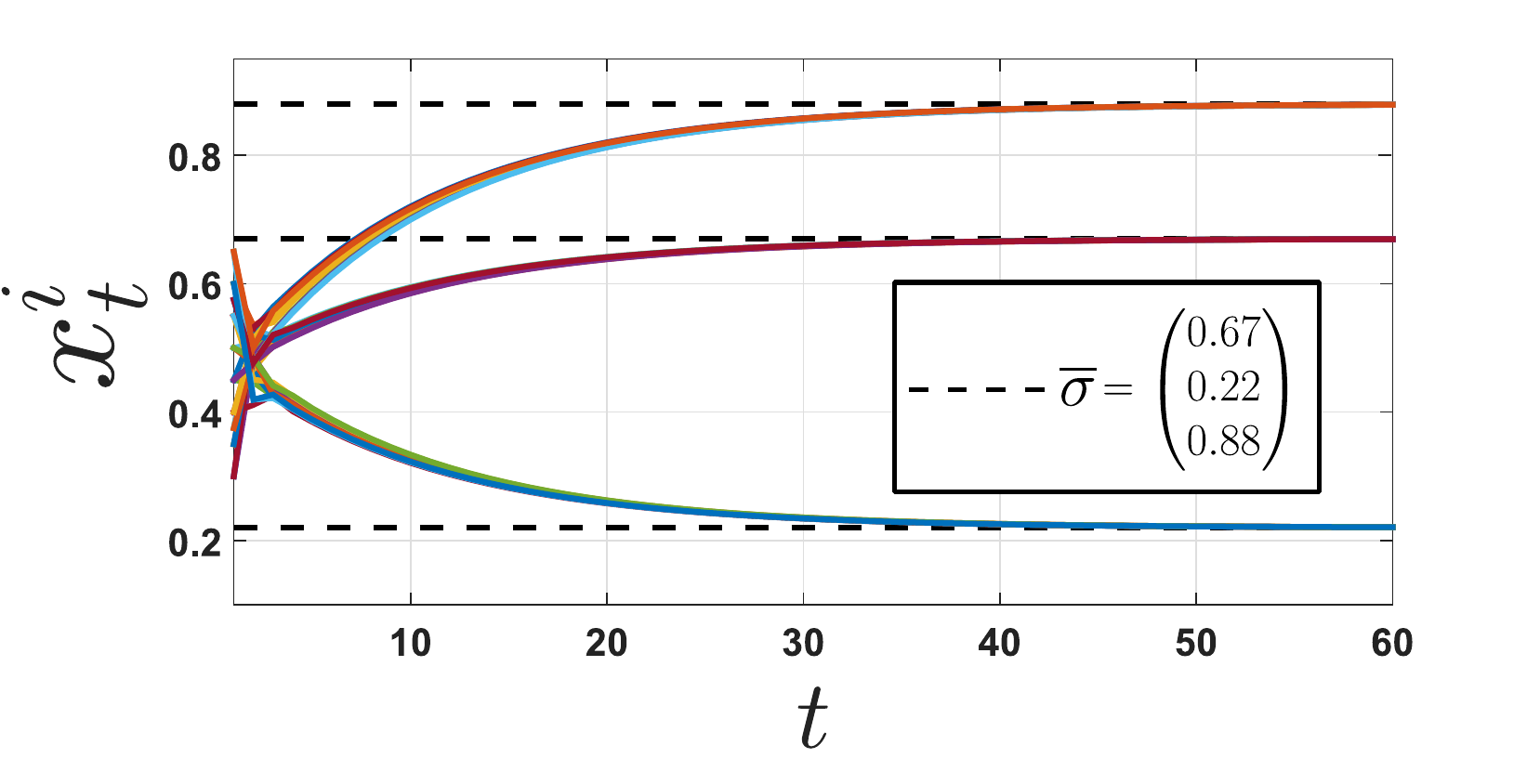}
    \caption{Evolution of the multidimensional agents' dyna\-mics with learning \eqref{eqn:1m}.}\label{FigSim3}
\end{figure}

\section{ Arbitrage Bounds}
We have taken the true volatility parameter as exogenous to our models. Our only requirement is that there is no static arbitrage, by which we mean that all the quotes in volatility which translate to option prices are such that one cannot trade in the different strikes to create a profit. Checking whether a volatility surface is indeed arbitrage free is non-trivial, nevertheless some sufficient conditions are well known~\cite{carr2005note, gatheral2014arbitrage, tehranchi2016uniform}. As long as the volatility surface satisfies them our analysis implies global stability towards an arbitrage free smile.

We parameterize the volatility function (assuming expiry $T \mbox{ and }S_0$ are fixed) and denote the option price as
\[\overline{BS}(K,\sigma(K)) \triangleq BS(S_0,K,T,\sigma(K)).\]  Our attention is on varying $K$, to ensure no static arbitrage. We assume that the $\sigma(K)$ translates into unique call option dollar prices, which follows from the strictly positive first derivative of the option price with respect to $\sigma$.

\begin{itemize}
\item {\bf Condition 1: (Call Spread)}
For $0 < K_1 \leq K_2$, we have 
$
\overline{BS}(K_1,\sigma(K_1))\geq \overline{BS}(K_2, \sigma(K_2)).
$

\item {\bf Condition 2: (Butterfly Spread)}  For $0 < K_1 < K_2 < K_3$, \\
$
\overline{BS}(K_1, \sigma(K_1))  + \frac{K_2-K1}{K_3-K_2}\times \overline{BS}(K_3, \sigma(K_3)) \geq \dfrac{K_3-K1}{K_3-K_2}\times \overline{BS}(K_2, \sigma(K_2)).
$

\end{itemize}

How these arbitrage-free curve volatility conditions are developed is not an easy task: see an account by \cite{roper2010arbitrage, lee2005implied}.  Delving into this topic would take us further into stochastic analysis and away from the focus of this paper.

\section{Connections and Conclusion}

Recently, there has been some rather interesting work on the intersection of computer science and option pricing. Demarzo et al. \cite{demarzo2006online} showed how to use efficient online trading algorithms to price the current value of financial instruments, deriving both upper and lower bounds using online trading algorithms.
Moreover, Abernethy et al. \cite{abernethy2012minimax,Frongillo2013hedge} developed Black-Scholes price as sequential two-player zero-sum game.  Whilst these papers made an excellent start to bridge the gap between two different academic communities - mainly mathematical finance and theoretical computer science - they do not address the reality of volatility smiles and trading. 
Our contribution can be viewed as making these connections more concrete. The smile itself is a conundrum and there have even been articles questioning whether it can be solved \cite{ayache2004can}.  The traditional way from the ground up is to develop a stochastic process for the volatility and asset price, possibly introducing jumps or more diffusions through uncertainty \cite{kamal2010implied, kyprianou2006exotic}.  Such models have been successfully developed, but the time is ripe to incorporate multi-agent models with arbitrage free curves. 

Combining learning agents in stochastic differential equation models \cite{schweizer2008arbitrage}, such as the Black-Scholes model, is an exciting proposition. Moreover, opinion dynamics as a subject on its own has been studied quite extensively. Recent references that present an expansive discussion in computer science are \cite{mosselopinion, panageasopinion}. Econophysics is the right community to develop new models. After all, there is no attachment to utilities of players or stochastic volatility models so beloved in the mathematical finance community. Free from these shackles, researchers can use a range of tools and techniques to build  more sophisticated models. Moreover, there is no restriction or debate on continuous or discrete time. While our framework is discrete, a continuous time could perhaps show a way forward to incorporate models from mathematical finance and financial economics \cite{nadtochiy2017robust, davis2007range, shafer2019game}.  The technical issues in random matrix products, briefly discussed in this paper, assure us that much more work needs to be done on the modelling and mathematical front. For example, the matrices $A$ and $\mathcal{E}$ can be dependent with correlation decreasing in time. The random case contraction would still hold. 

In this paper, we  introduce models of learning agents in the context of option trading. A key open question in this setting is how the market comes to a consensus about market volatility, which is reflected in derivative pricing through the Black-Scholes formula. The framework we have established allows us to explore other areas.
Thus far, we took the smile as an exogenous object, proving convergence to equilibrium beliefs. A natural step forward would be to look at the beliefs as probability measures, where each measure corresponds to a different option pricing model. Our learning models focus on interaction between agents. Actually, agents can be interpreted as algorithms. Each algorithm corresponding to a particular belief of a pricing model.

\section*{Acknowledgements}
The authors would like to thank Elchanan Mossel, Ioannis Panageas, Ionel Popescu and JM Schumacher for fruitful discussions. Tushar Vaidya would like to acknowledge a SUTD Presidential fellowship.
Carlos Murguia would like to acknowledge the National Research Foundation (NRF), Prime Minister's Office, Singapore, under its National Cybersecurity R\&D Programme (Award No. NRF2014NCR-NCR001-40) and administered by the National Cybersecurity R\&D Directorate.
Georgios Piliouras would like to acknowledge SUTD grant SRG ESD 2015 097 and MOE AcRF Tier 2 Grant  2016-T2-1-170.
\bibliographystyle{siam}
\bibliography{arxivV4}

\end{document}